\DeclareMathOperator{\tr}{tr}
\newcommand{\ie}{\textit{i.e.}}
\newcommand{\eg}{\textit{e.g.}}
\newtheorem{proposition}{Proposition}
\newtheorem{lemma}{Lemma}
\newtheorem{remark}{Remark}
\begin{document}

\title{Steering maps and their application to dimension-bounded steering}

\author{Tobias Moroder}
\affiliation{Naturwissenschaftlich-Technische Fakult\"at, 
Universit\"at Siegen, Walter-Flex-Str.~3, 57068 Siegen, Germany}

\author{Oleg Gittsovich}
\affiliation{Institute of Atomic and Subatomic Physics, TU Wien, 
Stadionallee 2, 1020 Wien, Austria}
\affiliation{Institute for Theoretical Physics, University of Innsbruck, 
Technikerstr. 25, 6020 Innsbruck, Austria}
\affiliation{Institute for Quantum Optics and Quantum Information, 
Austrian Academy of Sciences, Technikerstr. 21a, 6020 Innsbruck, Austria}

\author{Marcus Huber}
\affiliation{Departament de F\'isica, Universitat Aut\`onoma de Barcelona, 
08193 Bellaterra, Spain}
\affiliation{ICFO-Institut de Ci\`encies Fot\`oniques, 
Mediterranean Technology Park, 08860 Castelldefels (Barcelona), Spain}
\affiliation{Group of Applied Physics, University of Geneva, 1211 Geneva 4, Switzerland}

\author{Roope Uola}
\affiliation{Naturwissenschaftlich-Technische Fakult\"at, 
Universit\"at Siegen, Walter-Flex-Str.~3, 57068 Siegen, Germany}

\author{Otfried G\"uhne}
\affiliation{Naturwissenschaftlich-Technische Fakult\"at, 
Universit\"at Siegen, Walter-Flex-Str.~3, 57068 Siegen, Germany}

\begin{abstract}
The existence of quantum correlations that allow one party to steer
the quantum state of another party is a counterintuitive quantum effect that 
has been described already at the beginning of the past century. 
Steering occurs if entanglement can be proven although the description of 
the measurements on one party is not known, while the other side is characterized. 
We introduce the concept of steering maps that allow to unlock the sophisticated techniques 
developed in regular entanglement detection to be used for certifying 
steerability. As an application we show that this allows to go 
even beyond the canonical steering scenario, enabling a generalized 
dimension-bounded steering where one only assumes the Hilbert space 
dimension on the characterized side, but no description of the measurements.
Surprisingly this does not weaken 
the detection strength of very symmetric scenarios that have recently been carried out in 
experiments. 
\end{abstract}

\pacs{03.65.Ud, 03.67.Mn}

\maketitle

\textit{Introduction.}---While the term steering was coined already 
in the early days of quantum mechanics \cite{schroedinger35a}, its 
precise treatment only started alongside modern developments in 
quantum information theory \cite{wiseman07a,brunner_review}. The 
possibility to steer the ensemble in a two-party shared state in 
quantum mechanics requires that the two subsystems are 
entangled. To show steering, however, entanglement is not sufficient, 
since there are even some entangled states that are non-steerable. 
In fact, steering can be seen as 
entanglement verification where one relaxes all assumptions about 
the devices used by one of the parties, thus sacrificing the
ability to detect all entangled states.

This fundamental fact is also what motivates one of the recent interests 
into certifying the steerability of quantum states: Any successful steering 
test constitutes an entanglement test that is completely device independent 
for one of the parties and can thus be exploited to design more secure quantum protocols in situations where one of the parties may be untrusted. Apart from 
this it has been observed recently that steering is fundamentally
asymmetric~\cite{bowles14a} and that it is closely connected to joint measurability~\cite{quintino14a,uola14a}. Furthermore, steering is known to 
give an advantage for tasks like subset channel discrimination~\cite{piani14a}.  
Naturally this also spurred the interest in devising strong steering  criteria~\cite{wiseman07a,cavalcanti09a,schneeloch13a,evans13a,horodecki14a,zukowski14a}, to investigate their violation~\cite{marciniak14a} or to develop and to use 
it quantitatively~\cite{skrzypczyk13a,steeringGauss,pusey13a}. It has been shown 
that also bound entangled quantum states exhibit steering~\cite{moroder14a}. Experimentally, steering has been successfully shown in several recent experiments~\cite{bennet12a, wittmann12a, smith12a}, which all demonstrate 
that steering, taking into account also various loopholes, is already reachable 
with today's technology. 

In this manuscript we operationally connect steering with regular entanglement 
verification: We develop a framework that maps the steering certification problem to a regular entanglement detection problem.  More explicitly we construct a matrix from the measurement data that exhibits entanglement if the state is steerable. These steering maps, like we call them, allow us to harness the sophisticated techniques developed in entanglement theory and to go beyond the current state of the art in steering. Contrary to intuition this does not complicate the construction of steering criteria at all. In fact, we can use the resulting entanglement tests to derive non-linear or other improved steering tests that are not straightforward to derive with the standard semidefinite programming (SDP) approach at no additional expense. As an example of the vast possibilities of this framework we introduce a new concept that we call dimension-bounded steering and show that it is accessible with our techniques. In this scenario one removes also all assumptions of the usually trusted side, except that all  
measurements 
operate in the same Hilbert space of dimension $d$. In that, this dimension-bounded steering lies between nonlocality and regular steering. Nonetheless we also show that the robustness to experimental noise of dimension-bounded steering can be comparable or even equal to regular steering certification. This implies that recent loophole-free steering experiments could have also shown loophole-free dimension-bounded steering.

The manuscript is organized as follows: We first define steering and set 
the notation. We continue by demonstrating our approach in a dichotomic 
setting and then discuss our main technique, the steering maps. With this 
we then show that deciding steerability of an ensemble is equivalent to 
a separability problem. In the later part we discuss how our approach 
can be used to derive criteria for the dimension-bounded case. We end 
with an explicit example of this criterion for recent experiments and 
a discussion on its strength.

\textit{Steering.}---In the steering scenario, two parties (Alice and Bob) 
share a quantum state $\rho$. Alice
can choose between $n$ different measurements, each having $m$ 
possible results. Her choice is denoted by $x=1,\dots,n$ for the setting while 
the results are labeled by $a=1,\dots,m$. For Bob we assume that he performs 
full tomography on his reduced state depending on Alice's measurement and result.
So he is able to reconstruct the conditional states $\rho_{a|x}$ and  the 
data of this experiment is summarized by the ensemble 
$\mathcal{E}=\{ \rho_{a|x}\}_{a,x}$ of unnormalized density operators, 
where Alice's probability is $P(a|x)=\tr(\rho_{a|x})$. 

Originally, the question of steering asks whether Alice can convince Bob 
that she can steer the state at Bob's side via her measurements. This 
means that Bob cannot explain the reduced states $\rho_{a|x}$ as coming 
from some probability distribution $p(\lambda)$ of states $\rho_\lambda$, 
where Alice's measurements just give additional information about the
probability. As shown in Ref.~\cite{pusey13a} this can be reformulated
as follows:
An ensemble $\mathcal{E}$ is non-steerable if and only if there exist 
unnormalized density operators $\omega_{i_1\dots i_n}$ with $i_k=1,\dots,m$ 
for each $k=1,\dots,n$ such that
\begin{equation}
\label{eq:def_lhs}
\rho_{a|x} = \sum_{i_1,\dots,i_n} \delta_{i_x,a} \omega_{i_1\dots i_n}
\end{equation}
and steerable otherwise. This is the definition from which we start our 
considerations.

\textit{A dichotomic warm up.}---Let us first discuss the idea via the most 
simplest scenario of Alice having two dichotomic measurements, \ie, $n=m=2$,
in which case we use labels $a=\pm$ to provide easier distinguishable formulae. 
In this scenario the ensemble $\mathcal{E}=\{\rho_{+|1},\rho_{-|1},\rho_{+|2},\rho_{-|2} \}$ 
is called non-steerable if and only if there exists positive semidefinite operators 
$\omega_{ij}$ with $i,j=\pm$ such that
\begin{equation}
\label{eq:cond1}
\begin{aligned}
\rho_{+|1}&=\omega_{++}+\omega_{+-},  &\rho_{+|2}&=\omega_{++}+\omega_{-+},  \\
\rho_{-|1}&=\omega_{-+}+\omega_{--},  &\rho_{-|2}&=\omega_{+-}+\omega_{--},  
\end{aligned}
\end{equation}
holds. Note that these linear equations are not linearly independent, therefore 
$\mathcal{E}$ does not completely determine the unknowns $\omega_{ij}$. Choosing 
for instance an arbitrary $\omega_{++}$ the choices
\begin{equation}
\label{eq:lin_sol}
\begin{aligned}
\omega_{++}&,  &\omega_{+-}&=\rho_{+|1}-\omega_{++},  \\
\omega_{-+}&=\rho_{+|2}-\omega_{++},  &\omega_{--}&=\rho_{\Delta}+\omega_{++},  
\end{aligned}
\end{equation}
with $\rho_{\Delta}=\rho-\rho_{+|1}-\rho_{+|2}$ satisfy the linear constraints, where $\rho$ denotes the reduced density matrix of Bob.

Recall that steering constitutes one-side device-independent entanglement verification, 
because a non-steerable ensemble can always be reproduced by measurements on a 
separable state $\sigma_{AB}$. This works by the using
\begin{equation}
\label{eq:separable_state}
\sigma_{AB} = \sum_{ij} \ket{i,j}_{A}\bra{i,j} \otimes \omega_{ij}, 
\end{equation}
where $\ket{\pm,\pm}_A$ label computational basis states and 
measurements
$M_{\pm|1}= \ket{\pm}\bra{\pm}\otimes \mathbbm{1}$, $M_{\pm|2}=\mathbbm{1}\otimes \ket{\pm}\bra{\pm}$.

Whether we explicitly search for appropriate $\omega_{ij}$ satisfying Eq.~\eqref{eq:cond1}
or for the separable state $\sigma_{AB}$ in Eq.~\eqref{eq:separable_state} one could guess 
there is not much difference. However, looking for a separable state is a task we are 
well familiar with nowadays, due to extensive research in the past two decades on 
separability criteria~\cite{horodecki_review, guehne09a}. But there are two things 
to take into account: Obviously the state $\sigma_{AB}$ is not completely known to 
us. Also, $\sigma_{AB}$ is not just a separable state, because Alice's states are 
very special; such states are called classical-quantum~\cite{piani08a} or to have 
zero ``quantum discord"~\cite{dakic10a, navascuesrat}. Thus if one na\"ively applies 
a separability criterion one looses this required extra structure and the criterion 
will not be very strong. In the following we show how to circumvent these drawbacks.

\textit{Steering maps.}---
In the following, we reformulate the original SDP in an equivalent 
manner by using the duality of semidefinite programs~\cite{vandenberghe96a}. 
This will later allow to treat dimension-bounded steering.
First, to remove the discord zero structure we replace 
the basis states $\ket{i,j}\bra{i,j}$ by other positive semidefinite 
operators $Z_{ij}$ of our choice, so that we get a generic separable 
structure 
\begin{equation}
\label{eq:SIGMA}
\Sigma_{AB} = \sum_{ij} Z_{ij} \otimes \omega_{ij}.
\end{equation}
To get a unit trace for $\Sigma_{AB}$ and to remove the problem that not 
all $\omega_{ij}$ are known one enforces certain linear relations on $Z_{ij}$. 
Using for instance the solution of Eq.~\eqref{eq:lin_sol} in Eq.~\eqref{eq:SIGMA} 
one obtains
\begin{align*}
\Sigma_{AB} =& Z_{+-} \otimes \rho_{+|1} + Z_{-+} \otimes \rho_{+|2} +Z_{--} \otimes \rho_{\Delta} \\
\nonumber
&+ (Z_{++}-Z_{+-}-Z_{-+}+Z_{--}) \otimes \omega_{++}, 
\end{align*}
from which one sees that $\Sigma_{AB}$ is completely determined if the 
last term vanishes, \ie, $Z_{++}=Z_{+-}+Z_{-+}-Z_{--}$. With this identity 
the normalization of $\tr(\Sigma_{AB})=1$ is then equal to
$\tr(Z_{+-})\tr(\rho_{+|1})+\tr(Z_{-+})\tr(\rho_{+|2})+\tr(Z_{--})\tr(\rho_{\Delta}) = 1.$ 
This is exactly what we were looking for and we get the following sufficient
criterion for steerability:
\textit{For any non-steerable ensemble $\mathcal{E}$ and any choice of 
positive semidefinite operators $ Z_{ij}$, which satisfy the two just 
mentioned extra relations, the operator 
\begin{equation}
\label{eq:Z_sol}
\Sigma_{AB}=Z_{+-} \otimes \rho_{+|1} + Z_{-+} \otimes \rho_{+|2} +Z_{--} \otimes \rho_{\Delta} 
\end{equation}
is a separable quantum state.}

If for a given set of $Z_{ij}$ the state $\Sigma_{AB}$ is not separable, 
\ie, entangled or no quantum state at all, then operators $\omega_{ij}$ with 
the properties from Eqs.~(\ref{eq:cond1}, \ref{eq:lin_sol}) do not exist and
the underlying ensemble is steerable. In order to check this we can employ 
any separability criterion, \eg, partial transposition~\cite{peres96a}, positive 
maps~\cite{horodecki96b}, entanglement witness~\cite{horodecki96b,terhal00a}, 
computable cross norm or realignment~\cite{rudolph02a,chen03a}, covariance 
matrices \cite{PhysRevLett.99.130504}, to name only a few. The whole power of 
this is unlocked by the mapping $\ket{i,j}\bra{i,j} \mapsto Z_{ij}$, which we 
refer to as \emph{steering map} from now on. 

In the most general steering case 
we know that a non-steerable ensemble can always be obtained by measuring 
the separable state
$
\sigma_{AB} = 
\sum_{i_1\dots i_n} \ket{i_1,\dots,i_n}_A\bra{i_1,\dots, i_n} \otimes \omega_{i_1\dots i_n},
$
with appropriate measurements that only act non-trivially on the respective 
subsystem for Alice. Each computational basis state is now mapped to a new 
positive semidefinite operator $Z_{i_1\dots i_n}$ to obtain
\begin{equation}
\label{eq:gen_sigma}
\Sigma_{AB} = \sum Z_{i_1\dots i_n} \otimes \omega_{i_1 \dots i_n}.
\end{equation}
This operator is uniquely determined by the given ensemble 
$\mathcal{E}$ if and only if the chosen operators $Z_{i_1 \dots i_n}$ 
satisfy  
\begin{align}
\nonumber
Z_{i_1 i_2 \dots i_n} = & 
Z_{i_1 j_2 \dots j_n} + Z_{j_1 i_2 j_3 \dots j_n} + \dots  + Z_{j_1 j_2 \dots i_n}- \\ 
\label{eq:condZ}
& - (n-1) Z_{j_1 j_2 \dots j_n}
\end{align} 
for all possible choices of $i_1,\dots, i_n$ and $j_1,\dots j_n$. With this we 
are ready to state our first main result, which says that the developed criterion 
via steering maps is also sufficient. The proof 
is given in the appendix.

\begin{proposition}\label{prop:steering_map}
For any non-steerable ensemble $\mathcal{E}$ and any set of positive 
semidefinite operators $\mathcal{Z}=\{Z_{i_1\dots i_n}\}_{i_1\dots i_n}$ 
fulfilling (\ref{eq:condZ}) the operator given by Eq.~\eqref{eq:gen_sigma} 
has a separable structure. 

For any steerable ensemble $\mathcal{E}$ there exists a set of operators 
$\mathcal{Z}$ which uniquely determines $\Sigma_{AB}$ and satisfies 
$\tr(\Sigma_{AB})=1$, but where non-separability of $\Sigma_{AB}$ 
is detected by the swap entanglement witness. Here, the 
swap entanglement witness is the flip operator $V=\sum_{ij}\ket{ij}\bra{ji}$ 
where $Tr(\rho V) < 0$ signals entanglement.
\end{proposition}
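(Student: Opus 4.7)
My plan is to prove the two directions of Proposition~\ref{prop:steering_map} separately.

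For the soundness direction, I would first note that condition~(\ref{eq:condZ}) is equivalent to the existence of a decomposition
\[
Z_{i_1 \ldots i_n} = \sum_{x=1}^{n} A_{x, i_x} - (n-1)\, C
\]
for some operators $A_{x,i}$ and $C$. One direction is immediate: fix any reference string $(j_1, \ldots, j_n)$ and set $C := Z_{j_1 \ldots j_n}$ together with $A_{x, i} := Z_{j_1 \ldots j_{x-1}\, i\, j_{x+1} \ldots j_n}$; then~(\ref{eq:condZ}) reads exactly as this decomposition, and the converse is a short direct calculation. Substituting into~(\ref{eq:gen_sigma}) and using $\rho_{a|x} = \sum_{i_1\ldots i_n: i_x=a} \omega_{i_1\ldots i_n}$ yields
\[
\Sigma_{AB} = \sum_{a, x} A_{x, a} \otimes \rho_{a|x} - (n-1)\, C \otimes \rho_B,
\]
which depends only on $\mathcal{E}$ and $\mathcal{Z}$ (in particular $\Sigma_{AB}$ is uniquely determined). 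For a non-steerable ensemble, PSD operators $\omega_{i_1\ldots i_n}$ with the properties of~(\ref{eq:def_lhs}) exist, so $\Sigma_{AB} = \sum_{i_1\ldots i_n} Z_{i_1\ldots i_n} \otimes \omega_{i_1 \ldots i_n}$ is a manifest separable decomposition.

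For the completeness direction, I would invoke strong SDP/Farkas duality for the non-steerability feasibility problem. Steerability of $\mathcal{E}$ means that no PSD $\omega$'s satisfy~(\ref{eq:def_lhs}); the dual cone program then delivers Hermitian operators $\{F_{a|x}\}$ with $\sum_x F_{i_x|x} \geq 0$ for every string $(i_1, \ldots, i_n)$ and $\sum_{a,x} \tr(F_{a|x} \rho_{a|x}) < 0$. I would then set
\[
Z_{i_1 \ldots i_n} := \sum_x F_{i_x|x} + \alpha\, \mathbbm{1},
\]
with $\alpha \geq 0$ to be chosen later. This has the decomposition form above (with $A_{x,i} = F_{i|x} + \alpha \mathbbm{1}$ and $C = \alpha \mathbbm{1}$), so (\ref{eq:condZ}) holds automatically, and $Z_{i_1 \ldots i_n} \geq 0$ follows from the witness inequality together with the shift. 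Applying $\tr[V(A \otimes B)] = \tr(AB)$ gives
\[
\tr(V \Sigma_{AB}) = \sum_{a, x} \tr(F_{a|x} \rho_{a|x}) + \alpha,
\]
which stays negative for sufficiently small $\alpha$. Finally, rescaling $F \to \lambda F$ and tuning $\alpha$ arranges $\tr(\Sigma_{AB}) = 1$ simultaneously with $\tr(V \Sigma_{AB}) < 0$.

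The hardest step is the duality invocation: one needs strong duality (absence of a duality gap) for the conic feasibility system defining non-steerability, so that genuine infeasibility produces a bona fide separating witness $\{F_{a|x}\}$ rather than only an approximate one. This is typically handled via a Slater-type interior-point argument, or by using that the problem is a finite-dimensional linear-matrix-inequality system over closed polyhedral-times-PSD cones. Once the witness is in hand, the three remaining conditions -- positivity of each $Z_{i_1 \ldots i_n}$, $\tr(\Sigma_{AB}) = 1$, and $\tr(V \Sigma_{AB}) < 0$ -- reduce to showing that the two-parameter family $(\lambda, \alpha)$ admits a non-empty feasible region, which follows from a direct computation using $\sum_a \rho_{a|x} = \rho_B$ and $\tr(\rho_B) = 1$.
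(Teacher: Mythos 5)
Your proposal is correct and follows essentially the same route as the paper: both directions rest on strong duality (via a Slater point, $F_{a|x}=\mathbbm{1}$ in your parametrization, $Z_{i_1\dots i_n}=\mathbbm{1}$ in the paper's) for the non-steerability feasibility SDP, the identity $\tr[V(A\otimes B)]=\tr(AB)$ to read the dual objective as the swap witness, and a separate adjustment to enforce $\tr(\Sigma_{AB})=1$. The only step worth spelling out is that your $(\lambda,\alpha)$ feasible region is non-empty because $\sum_{a,x}\tr(F_{a|x})P(a|x)\geq 0$ (pair $\sum_x F_{i_x|x}\geq 0$ with any joint distribution having marginals $P(a|x)$); this is the analogue of the paper's Part-2 argument that $\tr(\Sigma_{AB})\geq 0$, and note that your fallback justification of strong duality by mere finite-dimensionality would not suffice on its own.
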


Let us remark that the steering map criterion is strictly stronger than a 
single steering inequality, which is similarly characterized by $\mathcal{Z}$, 
but where one only checks the swap entanglement witness. Moreover, the proposition 
also applies to steering scenarios where Bob measures a few observables rather 
than a tomographic complete set; in this case  non-separability of $\Sigma_{AB}$ 
must be verified via this partial information only. Note that since steering 
is closely related to joint measurability, Prop.~\ref{prop:steering_map} can 
directly be employed also for this task, and we are using a result from this 
field~\cite{carmeli12a} to deduce a collection of $\mathcal{Z}$ for the 
case $n=2, m=d$, cf. appendix.

\textit{Dimension-bounded steering.}---Next let us turn to the dimension-bounded 
steering case. Contrary to the standard steering setup, where it is essential that 
the measured observables on the characterized side are fully known, these criteria 
require only that Bob's measurements act on a fixed finite dimensional Hilbert space. 

To be precise, we assume that Bob can choose between $n_B$ different 
settings $y$ each yielding one of $m_B$ possible outcomes $b$. Each 
measurement is described by a POVM, \ie, a set of operators $\{M_{b|y}\}_b$
which satisfies positivity $M_{b|y}\geq0$ and normalization 
$\sum_b M_{b|y}=\mathbbm{1}$. As the sole restriction we have to assume
that they all act on the same Hilbert space with at most dimension $d_B$. 
Thus if Bob observes different distributions, $P(b|y,i)$, maybe conditioned
onto a separate event $i$ like a measurement result by Alice, then there must
exist a collection of different density operators $\{ \rho_i \}_i$ and a 
single set of appropriate POVMs, both on an $d_B$-dimensional Hilbert space, 
which reproduce the data, $P(b|y,i)=\tr(M_{b|y} \rho_i)$. \footnote{Note that 
we do not ``convexify'' the set of possible distributions, \ie, we are not
assuming the more general form 
$P(b|y,i)=\sum_\lambda P(\lambda) \tr(M_{b|y;\lambda} \rho_{i;\lambda})$ 
with  $d_B$-dimensional quantum states and measurements. First, we consider this 
largely unmotivated for experiments, second, it would considerably weaken 
the detection strengths of the criteria, and third, since it effectively 
corresponds to the case of many different $d_B$-dimensional systems it 
is a strange dimension restriction, except if one distinguishes classical
and quantum dimensions~\cite{brunner08a}.} To complete the description 
of the problem we 
assume that $n_A$, $m_A$ are the subsystem-labeled specifications 
for Alice, who is the fully uncharacterized side, and refer to 
it as a $d_B$-dimension-bounded steering scenario with
parameters $n_A,m_A,n_B,m_B$.

In order to derive steering criteria for this scenario we employ 
a fixed steering map to transform the problem to a standard 
separability question according to Prop.~\ref{prop:steering_map}.
Afterwards we use the entanglement detection techniques of 
Ref.~\cite{moroder12a} which require only a dimension constraint.

The criteria that we derive work best for Bob having dichotomic 
measurements  $n_B=2$. Before we give the main recipe we like to 
explain the ideas: As shown in the previous section we know that 
any steerable ensemble $\mathcal{E}$ can be detected by an 
appropriate collection $\mathcal{Z}$ such that 
$
\Sigma_{AB} = \sum_{i_1\dots i_n} Z_{i_1\dots i_n} \otimes \omega^{\rm spec}_{i_1\dots i_n}
$
is not a separable state. Here, $\omega^{\rm spec}_{i_1\dots i_n}$ 
should express that the $\omega_{i_1\dots i_n}$, when using a $\mathcal{Z}$ 
satisfying Eq.~\eqref{eq:condZ}, is given by a special solution of the linear
relations given by Eq.~\eqref{eq:def_lhs}, \eg, like in Eq.~\eqref{eq:Z_sol}. 
To show that $\Sigma_{AB}$ is not separable we can employ the CCNR 
criterion~\cite{rudolph02a,chen03a}. This criterion states that the correlation matrix $[C(\rho_{AB})]_{kl} = \tr( G_k^A \otimes G_l^B \rho_{AB})$ of any separable state $\rho_{AB}^{\rm sep}$ satisfies $\| C(\rho_{AB}^{\rm sep}) \|_1 \leq 1$. Here the appearing norm is the trace norm $\| C \|_1 = \sum_i s_i(C)$ given by the sum of the singular values $s_i(C)$, while the sets $\{ G_i \}_i$ are orthonormal Hermitian operators (not necessarily forming a basis) for the respective local side. Thus whenever $\| C(\Sigma_{AB}) \|_1 > 1$ the data $\mathcal{E}$ shows steering. Note, since $\| \cdot \|_1$ is unitarily equivalent, only the corresponding spanned local operator spaces matter.

However, one cannot directly evaluate this for the dimension-bounded scenario, 
because Bob can neither reconstruct $\rho_{a|x}$ nor compute values 
$\tr(G_k^B \rho_{a|x})$ because he lacks the precise description of
his measurements $M_{b|y}$. Still, we can build a matrix which looks 
similar to the correlation matrix and for which the dichotomic choice 
of Bob's measurements becomes important. For each dichotomic measurement
consider the operators given by the difference of the two POVM elements 
$B_y=M_{+|y}-M_{-|y}$ for $y=1,\dots, n_B$ and $B_0 =\mathbbm{1}$. Then, 
define the matrix $[D(\Sigma_{AB})]_{ky}$ 
with entries
\begin{align}
\label{eq:data_matrix}
\!\!\tr(G_k^A\otimes\! B_y \Sigma_{AB})\!=
\!\! \sum_{i_1\dots i_n} \!\tr(G_k^A Z_{i_1\dots i_n}) \tr(B_y \omega^{\rm spec}_{i_1\dots i_n}\!).
\end{align}
For convenience we assume that we only pick $n_B+1$ different 
operators $G_k^A$, such that $D$ is a square matrix with a determinant. 
We call this matrix the \emph{data matrix} $D$ to further express that 
$D$ is determined by the observed data $P(a,b|x,y)$ once
having selected $\mathcal{Z}$ and$\{ G_k^A \}_k$. 

From the data matrix $D$ we obtain a correlation matrix $C=DT$ if $T$ 
describes a linear transformation that maps $\{ B_y \}_y$ into an 
orthonormal set $\{ G_l^B=\sum_y T_{yl} B_l \}_l$. Though having only 
the limited information about $n_B$ being dichotomic measurements 
on a $d_B$-dimensional Hilbert space, this transformation $T$ satisfies \cite{moroder12a}
\begin{equation}
\label{eq:detT}
| \det(T) | \geq d_B^{-\frac{n_B+1}{2}}.
\end{equation}
To be precise, this only holds if $\{ B_y \}_y$ is linearly independent, but which can inferred directly from a data matrix with $|\det(D)| \not = 0$. Via this one can 
then lower bound the trace-norm of $C$ by
\begin{align}
\nonumber
\| C \|_1 & = \sum s_i(C) \geq (n_B+1) | \det(C) |^{\frac{1}{n_B+1}} \nonumber\\&= (n_B+1) \left( |\det(D)| |\det(T)| \right)^{\frac{1}{n_B+1}} \nonumber\\
\label{eq:motiv1}
&\geq \frac{n_B+1}{\sqrt{d_B}} |\det(D)|^{\frac{1}{n_B+1}}
\end{align}
using the inequality of the arithmetic and geometric means in the first 
step, the determinant rule, and finally Eq.~\eqref{eq:detT}. If this lower
bound is strictly above $1$, we certify that $\Sigma_{AB}$ is not separable 
and thus steerability of the underlying state. This is effectively the second 
condition of the following proposition; the other statement employs a 
slightly better bounding technique. 

\begin{proposition}\label{prop:dbsteering}
Consider a $d_B$-dimension-bounded steering scenario with parameters 
$n_A,m_A,n_B$ and $m_B=2$. {From} the observed data build up the data matrix
\begin{equation}
D_{ky} = \sum_{i_1\dots i_n} \tr(G^A_k Z_{i_1\dots i_n}) \tr(B_y \omega_{i_1\dots i_n}^{\rm spec})
\end{equation}
using $B_0=\mathbbm{1}$ and $B_y =M_{+|y}-M_{-|y}$ for $y=1,\dots,n_B$, any 
set of steering operators $\mathcal{Z}$ with $n_A, m_A$, and any choice of 
$n_B+1$ orthonormal operators $G_k^A$. 

Let $d_A$ be the dimension of the chosen $\mathcal{Z}$. If the observed data 
are non-steerable then the determinant of $D$ satisfies 
\begin{equation}
\label{eq:bound1}
|\det(D)| \leq \frac{1}{\sqrt{d_A}}\left( \frac{\sqrt{d_Ad_B}-1}{n_B\sqrt{d_A}}\right)^{n_B}
\end{equation} 
if $n_B > \sqrt{d_A d_B} -1$ and $\mathbbm{1} \in \textrm{span}(\{ G_i^A\})$. If 
this is not the case, non-steerable data give
\begin{equation}
\label{eq:bound2}
|\det(D)| \leq \left( \frac{\sqrt{d_B}}{n_B+1}\right)^{n_B+1}.
\end{equation}
\end{proposition}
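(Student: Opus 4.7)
The plan is to combine Proposition~\ref{prop:steering_map} with the CCNR separability criterion from the main text and then optimize over the singular values of the resulting correlation matrix. First I assume the observed data are non-steerable, so that by Proposition~\ref{prop:steering_map} the operator $\Sigma_{AB}=\sum_{i_1\dots i_n} Z_{i_1\dots i_n}\otimes\omega^{\rm spec}_{i_1\dots i_n}$ of Eq.~\eqref{eq:gen_sigma} is separable; since Eq.~\eqref{eq:condZ} is scale-invariant I rescale $\mathcal{Z}$ so that $\tr(\Sigma_{AB})=1$ and $\Sigma_{AB}$ is a bona fide separable quantum state. Introducing an orthonormalization map $T$ that turns $\{B_y\}_{y=0}^{n_B}$ into a Hermitian orthonormal set $\{G_l^B\}_{l=0}^{n_B}$, the matrix $C=DT$ has entries $C_{kl}=\tr(G_k^A\otimes G_l^B\,\Sigma_{AB})$, and CCNR yields $\|C\|_1=\sum_i s_i(C)\leq 1$. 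Combined with the dimension-bounded estimate $|\det(T)|\geq d_B^{-(n_B+1)/2}$ of Eq.~\eqref{eq:detT} and the identity $|\det(D)|=|\det(C)|/|\det(T)|$, the problem reduces to maximizing $\prod_i s_i(C)=|\det(C)|$ under the CCNR constraint.

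Bound~\eqref{eq:bound2} is then immediate from AM--GM, which gives $|\det(C)|\leq(n_B+1)^{-(n_B+1)}$; multiplying by $d_B^{(n_B+1)/2}$ reproduces Eq.~\eqref{eq:bound2}. For the sharper bound~\eqref{eq:bound1} I use the extra hypothesis $\mathbbm{1}\in\textrm{span}(\{G_k^A\})$: an orthogonal change of local operator basis (which preserves singular values) sets $G_0^A=\mathbbm{1}_A/\sqrt{d_A}$, while $B_0=\mathbbm{1}_B$ makes $G_0^B=\mathbbm{1}_B/\sqrt{d_B}$ after the Gram--Schmidt implicit in $T$. Consequently $C_{00}=1/\sqrt{d_A d_B}$ is fixed by $\tr(\Sigma_{AB})=1$, and the operator-norm inequality $s_1(C)\geq|C_{00}|$ supplies the extra constraint $s_1\geq 1/\sqrt{d_A d_B}$. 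Maximizing $\prod_i s_i$ under $\sum_i s_i\leq 1$ together with this additional constraint pushes $s_1$ onto the boundary $1/\sqrt{d_A d_B}$ exactly in the regime $n_B>\sqrt{d_A d_B}-1$; the remaining $s_i$ then equalize at $(\sqrt{d_A d_B}-1)/(n_B\sqrt{d_A d_B})$, and inserting the resulting $|\det(C)|$ into the $T$-determinant bound reproduces Eq.~\eqref{eq:bound1} after algebraic simplification.

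The main obstacle is verifying that the claimed corner of the feasible region is actually the global maximum of the constrained optimization. This reduces to two observations: first, the unconstrained AM--GM optimizer $s_i\equiv 1/(n_B+1)$ violates the extra constraint precisely when $n_B+1>\sqrt{d_A d_B}$, so in this regime the maximum must lie on the boundary $s_1=1/\sqrt{d_A d_B}$; second, the ordering $s_1\geq s_i$ for $i\geq 2$ is automatically compatible with the equal-share maximizer because $1/\sqrt{d_A d_B}\geq(\sqrt{d_A d_B}-1)/(n_B\sqrt{d_A d_B})$ is equivalent to $n_B+1\geq\sqrt{d_A d_B}$. Once these facts are in hand, the rest is algebraic bookkeeping.
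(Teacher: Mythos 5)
Your proof is correct and follows essentially the same route as the paper's: Proposition~1 plus the CCNR criterion, the determinant bound $|\det(T)|\geq d_B^{-(n_B+1)/2}$, and the extra constraint $s_1(C)\geq 1/\sqrt{d_A d_B}$ obtained from the identity components of both local operator sets. The only cosmetic difference is that you maximize $|\det(C)|$ subject to $\|C\|_1\leq 1$ and the singular-value constraint directly, whereas the paper lower-bounds $\|C\|_1$ as a function of $|\det(C)|$ and then inverts; the case split at $n_B+1=\sqrt{d_A d_B}$ and the resulting bounds are identical.
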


\textit{Application to experiments}.---In this part we give an explicit example of 
Prop.~\ref{prop:dbsteering} to demonstrate its application and also to compare its 
strength. We pick the scenario that has been implemented in the loophole-free steering 
experiment performed in Vienna~\cite{wittmann12a}. 
We follow the procedure outlined in our manuscript to arrive at the data matrix 
(for details see the Appendix):
\begin{equation}
\nonumber
\small
\frac{1}{\sqrt{2}}\! \left[\! \begin{array}{cccc}
1 & \braket{B_1} & \braket{B_2} & \braket{B_3} \\
\braket{A_1}/\sqrt{3} & \braket{A_1 B_1}/\sqrt{3} & \braket{A_1 B_2}/\sqrt{3} & \braket{A_1 B_3}/\sqrt{3} \\
\braket{A_2}/\sqrt{3} & \braket{A_2 B_1}/\sqrt{3} & \braket{A_2 B_2}/\sqrt{3} & \braket{A_2 B_3}/\sqrt{3} \\
\braket{A_3}/\sqrt{3} & \braket{A_3 B_1}/\sqrt{3} & \braket{A_3 B_2}/\sqrt{3} & \braket{A_3 B_3}/\sqrt{3} \end{array} \!\right]. 
\end{equation}
Because $n_B=3 > \sqrt{d_Ad_B} - 1 = 1$ and since the full operator basis 
for $A$ includes the identity we can use the bound given by Eq.~\eqref{eq:bound1}. Thus if
\begin{equation}
\label{eq:CRIT1}
| \det(D) | > \frac{1}{108},
\end{equation}
then the observed data show steering under the sole assumption that Bob's 
measurements act onto a qubit. 

If one evaluates this criterion for a noisy maximally entangled 
state $p\ket{\psi^-}\bra{\psi^-}+(1-p)\mathbbm{1}/4$, measuring
along the three spin directions $\sigma_1,\sigma_2,\sigma_3$, one 
verifies steering if $p > 1/\sqrt{3}$. This is  surprising, 
because the visibility to show standard steering, \ie, requiring 
the knowledge that Bob perfectly measures $\sigma_1,\sigma_2,\sigma_3$, 
is exactly the same. Thus, we learn that for this symmetric case, 
the only crucial knowledge of the measurements is that they act 
onto a qubit, but no further characterization is needed. In the 
appendix we discuss this scenario also under experimentally 
realistic conditions showing that todays technology indeed 
allows (or has already allowed) a loophole-free 
dimension-bounded steering experiment.

\textit{Conclusion.}--- We have introduced a framework that 
allows to map the steering problem to a standard separability 
problem.  This opened the possibility to exploit the sophisticated 
tools available in entanglement detection, thereby creating strong 
steering criteria. We showed dimension-bounded steering, as one
particularly further promising application. Considering 
that many quantum protocols require also a certain level 
of trust we believe that this dimension-bounded scenario 
is of high relevance for scenarios where at least one of 
the parties has some degree of confidence of his or her 
local device. We have shown that this ``nearly'' device 
independent scenario is a lot stronger than the still not
attainable full device-independent scenario. It will help 
to make quantum key distribution more robust~\cite{gittsovich12a,woodhead13a} 
and to unify frameworks of resource theories that exist for 
nonlocality \cite{julio} and steering \cite{rodrigo} to
approach a resource theory of partially device independent
entanglement certification.

\begin{acknowledgments}
We would like to thank B.~Wittmann for stimulating discussions. 
This work has been supported by the EU (Marie Curie CIG 293993/ENFOQI, 
STREP ``RAQUEL'' and Consolidator Grant 683107/TempoQ),
the BMBF (Chist-Era Project QUASAR), 
the FQXi Fund (Silicon Valley Community Foundation), the DFG, 
the Austrian Science Fund (FWF), the Spanish ministry of economy 
through the Juan de la Cierva fellowship (JCI 2012-14155), the 
Marie Curie Actions (Erwin-Schr\"odinger-Stipendium J3312-N27), 
and by the Finnish Cultural Foundation. MH furthermore acknowledges funding through the AMBIZIONE grant PZ00P2\_161351 from the Swiss National Science Foundation (SNF).
\end{acknowledgments}

\section{Appendix}

\subsection{Proof of Eq.~(8)}

Let us summarize the statement in the following proposition:

\begin{proposition}\label{prop:Z}
The set $\mathcal{Z}=\{Z_{i_1 \dots i_n}\}_{i_1 \dots i_n}$ 
uniquely determines $\Sigma_{AB}$ if and only if Eq.~(8) in the main text 
holds for any choices of $i_1,\dots, i_n$ and $j_1,\dots j_n$. 
\end{proposition}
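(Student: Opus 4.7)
My plan is to recast the claim as a purely linear-algebraic statement about the ambiguity in the decomposition \eqref{eq:def_lhs} and the linear functional that $\mathcal{Z}$ defines on it. For any fixed non-steerable ensemble $\mathcal{E}$ the set of PSD tuples $\{\omega_{i_1\dots i_n}\}$ obeying \eqref{eq:def_lhs} is affine: two solutions $\omega,\omega'$ differ by a tuple $\Delta\omega := \omega - \omega'$ of Hermitian operators lying in the null space
\[
\mathcal{N} = \Bigl\{\Delta\omega : \sum_{i_1,\dots,i_n}\delta_{i_x,a}\,\Delta\omega_{i_1\dots i_n}=0 \ \forall\,x,a\Bigr\}.
\]
Conversely, on the trivially non-steerable ensemble with $\omega^{(0)}_{i_1\dots i_n}=c\,\mathbbm{1}$ for a small $c>0$, every $\Delta\omega\in\mathcal{N}$ arises up to a small positive scaling as the difference $(\omega^{(0)}+t\Delta\omega)-\omega^{(0)}$ of two PSD solutions. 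Hence $\mathcal{Z}$ yields a well-defined $\Sigma_{AB}$ on every non-steerable ensemble if and only if
\[
\sum_{i_1\dots i_n} Z_{i_1\dots i_n}\otimes \Delta\omega_{i_1\dots i_n} = 0 \qquad \forall\,\Delta\omega \in \mathcal{N}.
\]

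Next I would translate this orthogonality into an explicit structural property of $\mathcal{Z}$. Expanding each $\Delta\omega_{i_1\dots i_n}$ in a Hermitian operator basis reduces the problem to the scalar case: the scalar coefficient tensors of $Z$ must lie in the row space of the constraint matrix $A_{(x,a),(i_1,\dots,i_n)} = \delta_{i_x,a}$, which is spanned by the characteristic tensors $\chi_{x,a}(i_1,\dots,i_n) = \delta_{i_x,a}$. Lifting back to operators, orthogonality of $Z$ to $\mathcal{N}$ is equivalent to the existence of Hermitian operators $g_k(i_k)$, $k=1,\dots,n$, such that
\[
Z_{i_1\dots i_n} = \sum_{k=1}^n g_k(i_k),
\]
i.e.\ $Z$ is additively separable across its indices (any constant term can be absorbed into a single $g_k$).

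Finally I would match this additive form with \eqref{eq:condZ}. Given $Z_{i_1\dots i_n}=\sum_k g_k(i_k)$, a direct summation yields
\[
\sum_{k=1}^n Z_{j_1\dots j_{k-1}i_k j_{k+1}\dots j_n} = (n-1)Z_{j_1\dots j_n} + Z_{i_1\dots i_n},
\]
which is exactly \eqref{eq:condZ}. Conversely, fixing any reference $(j_1,\dots,j_n)$ and defining $g_k(i_k):=Z_{j_1\dots j_{k-1}i_k j_{k+1}\dots j_n}-Z_{j_1\dots j_n}$ rewrites \eqref{eq:condZ} as $Z_{i_1\dots i_n}=\sum_k g_k(i_k)+Z_{j_1\dots j_n}$, exhibiting the additive form. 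Chaining the three equivalences completes the proof.

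The main subtlety I expect is the first equivalence: a priori, the PSD requirement on $\omega$ could shrink the accessible ambiguity to a strict subset of $\mathcal{N}$. The uniform-$\omega^{(0)}$ construction above neutralizes this by ensuring the PSD constraint is strictly inactive at $\omega^{(0)}$, so all of $\mathcal{N}$ is exercised and orthogonality to $\mathcal{N}$ is indeed both necessary and sufficient. The remaining steps are standard linear algebra transported to the operator-valued tensor setting.
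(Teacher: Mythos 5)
Your proof is correct. Both you and the paper reduce uniqueness of $\Sigma_{AB}$ to the requirement that $\mathcal{Z}$ annihilate every homogeneous solution of the linear system in Eq.~(1), but the middle of the argument is executed differently. The paper's Lemma~1 explicitly constructs a basis $\{v^{({\bf k})}\}$ of the null space of the constraint matrix, verifies its size against the codimension count $m^n-[n(m-1)+1]$ and its linear independence, and then reads Eq.~(8) off the conditions $\sum_{i_1\dots i_n} v^{({\bf k})}_{i_1\dots i_n}Z_{i_1\dots i_n}=0$ --- at first only for the reference index $j_1\dots j_n=m\dots m$, with general $j$ recovered by a relabelling argument. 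You instead pass directly to the orthogonal complement: orthogonality to $\ker A$ is membership in the row space of $A_{(x,a),(i_1\dots i_n)}=\delta_{i_x,a}$, which is exactly the set of additively separable tensors $Z_{i_1\dots i_n}=\sum_k g_k(i_k)$, and that additive form is manifestly equivalent to the symmetric relation~(8) for arbitrary $j$. This buys a cleaner structural characterization of the admissible $\mathcal{Z}$ and spares both the explicit basis construction and the relabelling step; on the other hand, the paper's explicit vectors $v^{({\bf k})}$ are reused later (in the dual SDP underlying Proposition~1), so its extra work pays off elsewhere. You are also more careful on one point: the paper's ``only if'' direction tacitly takes the homogeneous parts $X_{\bf k}$ to be arbitrary Hermitian operators, whereas you note that positivity of the $\omega$'s could in principle restrict the realized ambiguity, and you neutralize this with the strictly positive uniform solution $\omega^{(0)}=c\,\mathbbm{1}$. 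The price is that your necessity statement is quantified over all non-steerable ensembles rather than a fixed one, but that is the reading that matters for how the proposition is used.
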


Before we prove this proposition let us note a technical lemma,
which will be useful in the following. It describes the most 
general solution of $\omega_{i_1\dots i_n}$ which satisfy the 
relations demanded for a local hidden state model.

\begin{lemma}\label{lemma:w}
Any collection of hidden states $\omega_{i_1\dots i_n}$ which 
satisfies the set of linear equations given by Eq.~(1) in the main text 
for $\mathcal{E}$ can be written as $\omega=\omega^{\rm spec} + \omega^{\rm homo}$. 
A special solution $\omega^{\rm spec}$ is given by 
$\omega^{\rm spec}_{i_1\dots i_n}=0$ for all indices $i_1,\dots,i_n$ except
\begin{equation}
\label{eq:spec1}
\omega^{\rm spec}_{a m \dots m}=\rho_{a|1},\omega^{\rm spec}_{m a m \dots m} = \rho_{a|2},  \dots \: \omega^{\rm spec}_{m \dots ma}\! =\! \rho_{a|n},
\end{equation}
for $a < m$ and 
\begin{equation}
\label{eq:spec2}
\omega^{\rm spec}_{m \dots m} = \sum_x \rho_{m|x} - (n-1) \rho.
\end{equation}
The general solution of the corresponding homogeneous system is given by
\begin{equation}
\label{eq:sol_homo}
\omega^{\rm homo}_{i_1 \dots i_n} = \sum_{{\bf k}} v^{({\bf k})}_{i_1 \dots i_n} X_{{\bf k}}
\end{equation} 
using arbitrary Hermitian operators $X_{{\bf k}}$. Here ${{\bf k}}=k_1\dots k_n$ 
is an $n$-length index similar to the subscripts of $\omega$, where only the 
distinct possibilities with at least two $k_i<m$ are considered. For a 
fixed ${{\bf k}}$ the vector $v^{({\bf k})}$ is given by
\begin{eqnarray}
\label{eq:v} 
v^{({\bf k})}_{i_1\dots i_n}=&\delta_{i_1\dots i_n,k_1\dots k_n}- \delta_{i_1\dots i_n,k_1m\dots m}
- \ldots \nonumber\\& - \delta_{i_1\dots i_n,m \dots mk_n } + (n-1) \delta_{i_1\dots i_n,m\dots m }.
\end{eqnarray}
\end{lemma}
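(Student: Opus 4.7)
The lemma is a purely linear-algebraic characterization of the solution set of Eq.~(1) in the main text, regarded as an inhomogeneous linear system in the operator unknowns $\omega_{i_1\dots i_n}$. Since the system is linear, its full solution set is an affine subspace: any particular solution plus the homogeneous solution space. The plan is therefore (i) to verify that $\omega^{\rm spec}$ is a particular solution, and (ii) to verify that the family $\{v^{(\mathbf{k})}\}$ is a basis of the homogeneous solution space. Once both are in place, the claimed decomposition $\omega = \omega^{\rm spec} + \omega^{\rm homo}$ follows immediately.

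For (i), I would check Eq.~(1) by cases on $a$. For $a<m$ and any $x$, only one of the listed nonzero entries of $\omega^{\rm spec}$ is picked out by $\delta_{i_x,a}$, namely the one whose single non-$m$ slot sits in position $x$; the sum collapses to $\rho_{a|x}$. For $a=m$ the delta selects $\omega^{\rm spec}_{m\dots m}$ together with all $\omega^{\rm spec}_{m\dots m\, a'\, m\dots m}$ having $a'<m$ and the non-$m$ slot in a position $x'\neq x$; using $\sum_{a'<m}\rho_{a'|x'} = \rho - \rho_{m|x'}$ this combination telescopes cleanly to $\rho_{m|x}$. This is a short bookkeeping calculation rather than a conceptual step.

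For (ii), I would first verify that each $v^{(\mathbf{k})}$ lies in the homogeneous solution space. Contracting Eq.~\eqref{eq:v} against $\delta_{i_x,a}$ produces four kinds of contributions: $\delta_{a,k_x}$ from the full-match term, $-\delta_{a,k_x}$ from the single-match subtraction at position $p=x$, $-(n-1)\delta_{a,m}$ from the remaining $p\neq x$ single-match subtractions, and $+(n-1)\delta_{a,m}$ from the all-$m$ term; these cancel pairwise. Next I would establish a dimension count to pin down the span: the homogeneous system has $nm$ equations in $m^n$ unknowns with exactly $n-1$ linear dependencies (for each $x$, summing the equations over $a$ gives the same total $\sum_{\mathbf{i}}\omega_{\mathbf{i}}$), so the solution space has dimension $m^n - nm + n - 1$. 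The number of tuples $\mathbf{k}$ with at least two entries $<m$ equals $m^n - 1 - n(m-1) = m^n - nm + n - 1$, matching exactly.

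To conclude, linear independence of $\{v^{(\mathbf{k})}\}$ would follow by evaluating these vectors at indices $\mathbf{i}$ that themselves have at least two entries $<m$: for such $\mathbf{i}$, only the full-match term of $v^{(\mathbf{k})}$ is nonzero, giving the identity-like submatrix $v^{(\mathbf{k})}_{\mathbf{i}} = \delta_{\mathbf{k},\mathbf{i}}$. Combined with the dimension count, this shows that $\{v^{(\mathbf{k})}\}$ is a basis of the homogeneous solution space. The only real obstacle I anticipate is notational bookkeeping—keeping straight which slots equal $m$ versus which equal $k_p$ under each summation—so I would process the three types of terms in Eq.~\eqref{eq:v} one at a time. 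No deeper idea appears to be needed once the dimension count is observed to match.
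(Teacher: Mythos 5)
Your proposal is correct and follows essentially the same route as the paper's proof: verify $\omega^{\rm spec}$ by direct substitution, check that each $v^{(\mathbf{k})}$ annihilates the homogeneous equations, and combine the dimension count $m^n-[n(m-1)+1]$ with linear independence read off from the full-match entries $v^{(\mathbf{k})}_{\mathbf{i}}=\delta_{\mathbf{k},\mathbf{i}}$ on indices with at least two non-$m$ slots. Your case-by-case check of the special solution and the uniform treatment of the $a=m$ equations are slightly more explicit than the paper's, but the argument is the same.
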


\begin{proof}
Note that Eq.~(1) in the main text is a standard set of linear equations, 
except that we have Hermitian operators rather than scalar variables. 
Therefore all the basic linear algebra results apply. 

In total we have $m^n$ unknowns but only $n(m-1)+1$ linear independent 
relations recalling once more that $\sum_a \rho_{a|x}=\rho$ is independent 
of the setting. Hence the general solution can be written as a combination 
of a special solution and the general solution of the homogeneous system
$\sum \delta_{i_x,a} \omega_{i_1\dots i_n} = 0$. 

That $\omega^{\rm spec}$ as given in the Lemma is a special solution 
can be checked straightforwardly. For the general solution of the 
homogeneous system $\omega^{\rm homo}$ note that via the Ansatz 
of Eq.~(\ref{eq:sol_homo}) this breaks down to the relation 
\begin{equation}
\label{eq:homo_vec}
\sum_{i_1,\dots,i_n} \delta_{i_x,a} v^{({\bf k})}_{i_1\dots i_n}=0.
\end{equation}
The dimension of this linear subspace is $m^n - [n(m-1)+1]$, which 
is precisely the number of the considered ${\bf k}$'s. Now 
first note that the given $\{ v^{({\bf k})} \}_{\bf k}$ are 
linearly independent, since vector $v^{({\bf k})}$ is the 
only vector which has a non-zero entry at the position $i_1\dots i_n=k_1\dots k_n$.
Thus we are left to show that they indeed solve Eq.~(\ref{eq:homo_vec}). 
For the $x=1$ and $a<m$ this follows for instance by
\begin{equation}
\sum_{i_2 \dots i_n} v^{({\bf k})}_{a i_2 \dots i_n} =
\underbrace{+1}_{a k_2 \dots k_n} \underbrace{-1}_{a m \dots m}  = 0 
\end{equation}
if $k_1=a$, otherwise it holds trivially. 
The same arguments holds if one picks a different index $i_x$. 
At last we still need to check the relation corresponding 
to reduced state, which is given by
\begin{equation}
\sum_{i_1 \dots i_n} v^{({\bf k})}_{i_1  \dots i_n} =
\underbrace{+1}_{k_1 k_2 \dots k_n} \underbrace{-n}_{\{k_1 m \dots m, \dots, m \dots k_n\}}   \underbrace{n-1}_{m\dots m}= 0. 
\end{equation}
which finishes the proof.
\end{proof}

\begin{proof}[Proof of Prop.~\ref{prop:Z}]
Using the general solution $\omega^{\rm sol}$ as given the Lemma~\ref{lemma:w}
in the operator $\Sigma_{AB}$ one sees 
that 
\begin{eqnarray}
\Sigma_{AB} &=& 
\sum_{i_1 \dots i_n } Z_{i_1\dots i_n} \otimes 
\omega^{\rm spec}_{i_1\dots i_n} \nonumber\\ 
&&+\sum_{{\bf k}} \left( \sum_{i_1\dots i_n} v_{i_1\dots i_n}^{({\bf k})}Z_{i_1\dots i_n} \right) \otimes X_{\bf k}
\end{eqnarray} 
is uniquely determined by the given ensemble $\mathcal{E}$ if and only if 
\begin{equation}
\label{eq:constZ}
\sum_{i_1\dots i_n} v_{i_1\dots i_n}^{({\bf k})}Z_{i_1\dots i_n} = 0
\end{equation}
holds for all possibilities ${\bf k}$. Using the explicit form
of the vectors $v^{({\bf k})}$ as given in Eq.~\eqref{eq:v} 
these constraints can be re-written as
\begin{align}
\nonumber
Z_{k_1\dots k_n} =& Z_{k_1 m \dots m} + Z_{m k_2 \dots m} + \ldots + Z_{m \dots k_n} \\
\label{eq:completeZ}&- (n-1) Z_{m \dots m}
\end{align}
for all admissible $k_1\dots k_n$ with at least two $k_i<m$. However, 
this condition also holds also for each $k_1\dots k_n$ without this
restriction, because then the vectors $v^{({\bf k})}$ in Eq.~(\ref{eq:v})
vanish. Thus we have proven Eq.~(8) in the main text for all $i_1\dots i_n$, 
but only for the special index set $j_1 \dots j_n=m\dots m$. Still, these
conditions already imply the general (more symmetric looking) relation, 
using an arbitrary $j_1\dots j_n$. This can be inferred more easily 
directly from the problem formulation by relabeling the individual 
outcomes of the conditional states.
\end{proof}

\subsection{Proof of Proposition~1}

We prove this in two parts; the first only considers the statement 
without the extra condition $\tr(\Sigma_{AB})=1$, but which is 
discussed in the second part then. 

As mentioned in the main text, the proof rests on the duality properties of semidefinite programs. In fact, the first part of the proof can be considered as a special interpretation of the dual program of the original semidefinite program. Since the dual might be of independent interest, we compactly summarizes it in Remark~\ref{remark:direct_dualSDP}.

\begin{proof}[Proof, Part $1$]
The idea of the proof is to employ the duality statements given 
by respective semidefinite programs. Recall that the problems 
$\inf_{x \in \mathbbm{R}^n} \{c^Tx|F_0+\sum_i x_i F_i \geq 0\}$ 
and $\sup_{Z\geq 0 } \{ -\tr(ZF_0)| \tr(ZF_i)=c_i \forall i\}$, 
called primal and dual 
semidefinite programs, are connected by a couple of important 
relations. The most relevant is strong duality, which states 
that both optimal values are equal. This holds for instance 
under the Slater regularity condition that either problem has 
a strictly feasible point, \ie, either an $x$ such that 
$F_0 + \sum_i x_i F_i > 0$ or a $Z>0$ satisfying $\tr(ZF_i)=c_i$~[32].
The proof 
goes along the following lines: We parse the original 
steering problem into the form of the primal semidefinite 
program, then we invoke its dual, show strong duality such 
that we can ensure that it gives the same solution, and 
finally we interpret this dual program as a the swap witness on $\Sigma_{AB}$.

To start let us write the original problem into the form of
a primal semidefinite program, which is given by
\begin{eqnarray}
\label{eq:primal}
\inf & &0 \\
\nonumber
\textrm{s.t.} & & \omega_{i_1\dots i_n}^{\rm spec} + \sum_{{\bf k}} v^{({\bf k})}_{i_1\dots i_n} X_{{\bf k}} \geq 0 \:\: \forall i_1\dots i_n.
\end{eqnarray}
This can be transformed to the standard form if one 
uses, i) a Hermitian operator basis $\{ S_r \}$ to transform
the matrix-valued variables $X_{\bf k}$ into 
$X_{\bf k} = \sum_{r} x_{{\bf k},r} S_r$ to scalar-valued 
variables $x_{{\bf k},r}$, and ii) that several positivity 
constraints are equivalent to a single positivity constraint 
of a corresponding block matrix. We emphasize that 
Eq.~\eqref{eq:primal} is a special primal problem 
called feasibility problem, since we effectively do 
not optimize anything. By convention, if the constraint
cannot be fulfilled then the infimum is $+\infty$.

Working out the dual gives 
\begin{eqnarray}
\label{eq:dual}
\sup & & - \sum_{i_1 \dots i_n} \tr(Z_{i_1\dots i_n} \omega_{i_1\dots i_n}^{\rm spec}) \\
\nonumber
\textrm{s.t.} & & Z_{i_1\dots i_n} \geq 0\:\: \forall i_1\dots i_n, \\
\nonumber
&& \sum_{i} v^{({\bf k})}_{i_1\dots i_n} Z_{i_1\dots i_n} = 0\:\: \forall {\bf k}.
\end{eqnarray}
If one has used the standard form for the previous problem, 
one simply reverses here the points i) and ii); the block-structure 
can be removed directly, while the linear relations in the last 
line of Eq.~\eqref{eq:dual} appear since one has respective 
linear relations for all Hermitian operator basis elements.  

This dual has a strictly feasible point 
$Z_{i_1\dots i_n}=\mathbbm{1} > 0$, noting 
$\sum_{i} v^{({\bf k})}_{i_1\dots i_n}=0$ was already
proven in Lemma~\ref{lemma:w}. Therefore we have strong 
duality, and consequently the statement that, whenever 
the primal problem is infeasible ($\mathcal{E}$ steerable) 
then there exists a sequence of appropriate $Z_{i_1\dots i_n}$ 
such that $C=\sum_{i} \tr(Z_{i_1\dots i_n} \omega_{i_1\dots i_n}^{\rm spec})$ 
will tend to $-\infty$, saying that Eq.~\eqref{eq:dual} 
is unbounded. We summarize this more direct dual SDP in Remark~\ref{remark:direct_dualSDP}.

Now let us interpret this as the detection statement of
the proposition. That we labeled the dual variables by
$Z_{i_1\dots i_n}$ as also used in $\Sigma_{AB}$ is no 
coincidence. Effectively the solutions $Z_{i_1\dots i_n}$
of the dual program will be the ones used in the operator 
$\Sigma_{AB}$ that shows steering. Note that the variables 
of the dual program already satisfy positivity $Z_{i_1\dots i_n}\geq 0$ 
and the linear relations in Eq.~\eqref{eq:dual} uniquely 
determine $\Sigma_{AB} = \sum_{i} Z_{i_1\dots i_n} \otimes \omega_{i_1\dots i_n}^{\rm spec}$,
as already shown in the proof of 
Prop.~\ref{prop:Z}. Finally, note here the formal 
operator connection between $\Sigma_{AB}$ and the 
objective function $C$. Using the swap operator 
$V$, \ie, $\tr(V A \otimes B)=\tr(AB)$, one directly
sees that the swap operator evaluated on $\Sigma_{AB}$
gives the objective value $\tr(V\Sigma_{AB})=C$. Since the 
swap operator $V$ is an entanglement witness a negative $\tr(V\Sigma_{AB})=C<0$
signals that the optimal 
$\Sigma_{AB}$ has not a separable structure. This finishes the first part 
of the proof.\end{proof}

\begin{remark}\label{remark:direct_dualSDP}
The dual problem to the feasibility problem for the collection of positive semidefinite operators satisfying the relations given by Eq.~(1) reads as
\begin{eqnarray}
\sup && - \sum_{i_1 \dots i_n} \tr(Z_{i_1\dots i_n} \omega_{i_1\dots i_n}) \\
\nonumber
\textrm{\emph{s.t.}} & & Z_{i_1\dots i_n} \geq 0\:\:\:\: \forall i_1\dots i_n, \\
\nonumber
& &Z_{i_1 i_2 \dots i_n} = Z_{i_1 j_2 \dots j_n} + Z_{j_1 i_2 j_3 \dots j_n} + \dots  + Z_{j_1 j_2 \dots i_n}\\ 
\nonumber 
& & \:\:\:\:\:\:\:\:\:\:\:\:\:\:\:\:\:\:\:\:\:\:- (n-1) Z_{j_1 j_2 \dots j_n}\;\; \forall i_1,\dots j_n.
\end{eqnarray}
Via the linear equations for $Z_{i_1\dots i_n}$ and by Eq.~(1) one can evaluate the objective $C=\sum_{i_1 \dots i_n} \tr(Z_{i_1\dots i_n} \omega_{i_1\dots i_n})$. For instance, if one picks fixed indices $j_1,\dots,j_n$ one arrives at
\begin{eqnarray*}
C\!\!&=&\!\!\! \sum_{i_1\dots i_n}\! \tr(Z_{i_1 j_2 \dots j_n} \omega_{i_1\dots i_n}) \!+ \dots + \!\!\sum_{i_1\dots i_n} \!\!\tr(Z_{j_1 \dots i_n} \omega_{i_1\dots i_n})\\ && - (n-1) \sum_{i_1\dots i_n} \tr(Z_{j_1 j_2 \dots j_n} \omega_{i_1\dots i_n}) \\
 &=&\!\! \sum_{i_1} \tr\big[Z_{i_1 j_2 \dots j_n} (\sum_{i_2\dots i_n}\omega_{i_1\dots i_n}) \big] + \dots \\ && +\sum_{i_n} \tr[Z_{j_1 \dots i_n} (\sum_{i_2\dots i_n}\omega_{i_1\dots i_n})]\\ && - (n-1)  \tr[Z_{j_1 j_2 \dots j_n} (\sum_{i_1\dots i_n}\omega_{i_1\dots i_n})] \\
&=& \!\! \sum_{i_1} \tr(Z_{i_1 j_2 \dots j_n} \rho_{i_1|1}) + \dots + \sum_{i_n} \tr(Z_{j_1 \dots i_n} \rho_{i_n|n})\\ 
&& -(n-1)\tr(Z_{j_1 j_2 \dots j_n} \rho). 
\end{eqnarray*}
Note that any other choice gives the same value; this is expressed by  $C=\sum_{i_1 \dots i_n} \tr(Z_{i_1\dots i_n} \omega_{i_1\dots i_n}^{\rm spec})$.
\end{remark}

\begin{proof}[Proof, Part $2$]
It is left to show that we can also find a solution $\mathcal{Z}$ 
which satisfies $\tr(\Sigma_{AB})=1$, since such a condition does 
not appear in Eq.~\eqref{eq:dual}. Note that since the value of 
an objective function of any steerable ensemble will tend to $-\infty$, 
there are for sure parameters $\mathcal{Z}$ such that $C < 0$. Suppose 
that for these $Z_{i_1\dots i_n}$, the operator $\Sigma_{AB}$ is not
normalized. If $\tr(\Sigma_{AB})>0$, then one can directly used a 
rescaled version $Z_{i_1\dots i_n}/\tr(\Sigma_{AB})$, now also 
satisfying the trace condition, but still detecting the state. 
Note that this trick fails if $\tr(\Sigma_{AB})\leq 0$, either 
due to a division by zero, or due to $Z_{i_1\dots i_n}$ being not 
positive semidefinite anymore. Thus we are left to prove that
$\tr(\Sigma_{AB}) > 0$.

To verify $\tr(\Sigma_{AB})\geq 0$ we employ that $C\geq 0$ holds for
any non-steerable ensemble. {From} the given ensemble $\mathcal{E}$ such 
a non-steerable ensemble is for instance
$\mathcal{\tilde E}= \{ \tilde \rho_{a|x}= \tr(\rho_{a|x}) \mathbbm{1}/d \}$, 
having a special solution 
$\tilde \omega_{i_1\dots i_n} = \tr(\omega^{\rm spec}_{i_1 \dots i_n})\mathbbm{1}/d$ 
as can be checked by Eqs.~(\ref{eq:spec1}, \ref{eq:spec2}). Thus evaluating the 
objective function of this non-steerable ensemble and the chosen 
selection $\mathcal{Z}$ one finds
\begin{align}
\nonumber
\sum_{i_1\dots i_n}& \tr(Z_{i_1\dots i_n} \tilde \omega_{i_1\dots i_n}^{\rm spec})  \\&=\nonumber \frac{1}{d} \sum_{i_1 \dots i_n} \tr(Z_{i_1\dots i_n}) \tr(\omega^{\rm spec}_{i_1\dots i_n}) =\nonumber \frac{1}{d} \tr(\Sigma_{AB}) \geq 0.
\end{align}

Finally, we show that from $\mathcal{Z}$ with $C<0$ and 
$\tr(\Sigma_{AB})=0$ it is always possible to find a different 
solution $\mathcal{\bar Z}$ with $\bar C<0$ but $\tr(\Sigma_{AB})>0$
such that we can employ the rescaling trick again. Note first that
the only negative part in the $C$ must be due to 
$\tr(Z_{m\dots m} \omega_{m\dots m}^{\rm spec}) < 0$, since 
all other terms involve 
only positive semidefinite operators. Now pick any
$\omega_{i_1\dots i_n}^{\rm spec}$ with 
$\tr(\omega_{i_1\dots i_n}^{\rm spec}) > 0$, 
and assume this is $\omega^{\rm spec}_{am\dots m}$ with $a<m$. 
Then define the new set of operator 
\begin{align}
\bar Z_{am \dots m} &= Z_{am\dotsm} + \epsilon \mathbbm{1}, \nonumber \\
\bar Z_{mam \dots m} &= Z_{mam \dots m}, \dots, \bar Z_{m\dots m} = Z_{m\dots m}
\end{align}
which by Eq.~\eqref{eq:completeZ} are enough to fully determine 
the set $\mathcal{\bar Z}$. This set still contains only positive
semidefinite operators because the only operators that change are 
$\bar Z_{ai_2\dots i_n} = Z_{a i_2 \dots i_n} + \epsilon \mathbbm{1}$.
For this new solution $\mathcal{\bar Z}$ we get
$\tr(\bar \Sigma_{AB}) = \epsilon \tr(\omega^{\rm spec}_{am\dots m})$ 
and $\bar C = C+ \epsilon \tr(\omega^{\rm spec}_{am\dotsm})$, thus 
choosing $\epsilon$ small enough one obtains the given statement.
This completes the proof.
\end{proof}

\subsection{Proof of Proposition~2}

The ideas and bounding techniques are the same as in 
Ref.~[35], which derived similar determinant 
constraints for the dimension-bounded entanglement verification; 
here we only need to apply them to a single side. 

\begin{proof}
Inequality~(14) in the main text is just a
rearrangement of Eq.~(11) in the main text. We remark once more that the
bound of $T$ as given by Eq.~(10) in the main text holds only
if $\{ B_y \}_y$ is linearly independent, which follows
from the observation $|\det(D)| \not = 0$. 

The first and stronger condition in Eq.~(13) in the main text follows using 
the extra information of $C$ that if both sets $\{ G_k^A \}_k$, $\{ G_l^B \}_l$ 
have the identity in its linear span, then the largest singular value satisfies 
$\sigma_0(C) \geq q=\tr(\mathbbm{1}/\sqrt{d_A} \otimes \mathbbm{1}/\sqrt{d_B} \Sigma_{AB}) 
= 1/\sqrt{d_A d_B}$. This follows from the fact that the ordered singular values of 
$C$ are lower bounded by the ordered singular values of any submatrix $C^{\rm sub}$ 
of $C$. While $\{ G_l^B \}_l$ satisfies this extra condition automatically since 
$B_0=\mathbbm{1}$, we need this requirement for the choice of $\{ G_k^A \}_k$. 

Via this extra condition we can achieve a better bound using the inequality of 
arithmetic and geometric means only to $n_B$ singular values and then checking 
whether the minimal value of $\sigma_0(C)$ can be reached, more precisely one 
obtains
\begin{align}
\nonumber
\min_{\sigma_0(C) \geq q} \|C \|_1 \geq \min_{\sigma_0(C) \geq q} \Big[\sigma_0(C) 
+ \left(\frac{|\det(C)|}{\sigma_0(C)}\right)^{\frac{1}{n_B}}\Big] \\
\label{eq:help1}
= \left\{\! 
\begin{array}{cc}
(n_B + 1) |\det(C)|^{\frac{1}{n_B+1}} & \text{if } |\det(C)|^{\frac{1}{n_B+1}} \geq q \\
q + n_B \left(\frac{|\det(C)|}{q}\right)^{\frac{1}{n_B}} & \text{else} \end{array}, \right.
\end{align}
depending on the determinant of $C$. Note that both bounds are monotonically increasing 
functions. By the determinant rule $|\det(C)|=|\det(D)| | \det(T)|$ and the bound of 
Eq.~10 in the main text, the possible values are constrained to satisfy
\begin{equation}
\label{eq:region}
|\det(C)| \geq |\det(D)|d_B^{-\frac{n_B+1}{2}}.
\end{equation} 
Thus, depending on the value of $|\det(D)|$ the second bound in Eq.~\eqref{eq:help1}
can be used or not. If $|\det(D)|^{1/(n_B+1)} \geq 1/\sqrt{d_A}$ the determinant 
of $C$ will always satisfy the constraint in Eq.~\eqref{eq:help1} and one obtains
\begin{equation}
\label{eq:bound_region1}
\min_{\sigma_0(C) \geq q} \|C\|_1  \geq \frac{n_B+1}{\sqrt{d_B}} | \det(D) |^{\frac{1}{n_B+1}}.
\end{equation}
Otherwise one can split the possible region and minimize separately, yielding
\begin{align}
\label{eq:bound_region2}
 &\min_{\sigma_0(C) \geq q}\! \|C \|_1 \geq \\ \nonumber 
 &\min \!\left \{  \frac{1}{\sqrt{d_A d_B}}\! + n_B\!\left(\!\sqrt{d_A} d_B^{-\frac{n_B}{2}} | \det (D)|\right)^{\frac{1}{n_B}}\!\!, \frac{n_B+1}{\sqrt{d_A d_B}} \right\}\!. 
\end{align}

At last, if $n_B > \sqrt{d_A d_B} -1$ note that the bound given by 
Eq.~\eqref{eq:bound_region1} and the second argument in minimum of 
Eq.~\eqref{eq:bound_region2} are strictly larger than $1$. Thus only 
the first argument of Eq.~\eqref{eq:bound_region2} must be checked,
which is the stated condition. This completes the proof. \end{proof}

\subsection{Steering scenario for $n=2$ and $m=d$}

In this section we exemplify the construction of respective 
$\mathcal{Z}=\{ Z_{ij}\}_{ij}$ for the case of two settings
but arbitrary number of outcomes. The idea and construction 
rely on Fourier connected mutually unbiased bases~[33]. 
Thus we need a couple of definitions first.

Consider a Hilbert space $\mathbbm{C}^d$ and suppose that one 
has a basis $\{\ket{\phi_k}\}_{k\in \mathbbm{Z}_d}$ with 
$\mathbbm{Z}_d=\{ 0,\dots,d-1\}$, which we also use to label
the outcomes. Then one obtains another basis, which is mutually
unbiased, by the Fourier transform
\begin{equation}
\ket{\psi_k}=\mathcal{F}\ket{\phi_k}=\frac{1}{\sqrt{d}}\sum_{l \in \mathbbm{Z}_d} q^{kl} \ket{\phi_l}
\end{equation}  
with $q=e^{2\pi i/d}$.

These two bases even admit further structure which becomes convenient
in the following. Consider two representations $U,V$ of the cyclic
group $\mathbbm{Z}_d$ on $\mathcal{H}$ defined by its action onto 
the first basis, $U_x\ket{\psi_k}=\ket{\psi_{k+x}}$ and
$V_y \ket{\psi_k}=q^{yk}\ket{\psi_{k}}$ for all $x,y,k$. 
These two representations further satisfy $U_xV_y=q^{-xy}V_yU_x$
and the Fourier transform is the intertwining map, 
$U_x \mathcal{F}=\mathcal{F}V_x^\dag$ and $V_y \mathcal{F}=\mathcal{F} U_y$. 
Via this one can identify the action on both basis states that we summarize as
\begin{align}
\label{eq:rules1}
U_x \ket{\phi_k} &= \ket{\phi_{k+x}}, & U_x \ket{\psi_k}&=q^{-xk}\ket{\psi_k}, \\
\label{eq:rules2}
V_y \ket{\phi_k} &= q^{yk}\ket{\phi_{k}}, &V_y \ket{\psi_k}&=\ket{\psi_{k+y}}
\end{align}
for all $x,y \in \mathbbm{Z}_d$. Then the following set of operators
will be our characterization of the steering inequality. The structure 
can be guessed once one knows the so-called mother observable for the 
respective joint measurability problem~[33], from whose 
result one further knows that the current form is optimal.

\begin{proposition}
Consider the set of operators $\mathcal{Z}=\{Z_{kl}=U_kV_lZ_{00}V_l^\dag U_k^\dag\}$ 
with 
\begin{equation}
\label{eq:MUB_Z00}
Z_{00} = \mu_1\ket{\chi_-}\bra{\chi_-} + \mu_2(\mathbbm{1}-\ket{ \chi_+}\bra{\chi_+} - \ket{\chi_-}\bra{\chi_-}),
\end{equation}
pure states $\ket{\chi_{\pm}}\propto \ket{\phi_0}\pm\ket{\psi_0}$ and parameters
\begin{align}
\mu_1&=\frac{2}{\sqrt{d}(\sqrt{d}-1)(\sqrt{d}+2)}, \\ \mu_2&=\frac{1+\sqrt{d}}{\sqrt{d}(\sqrt{d}-1)(\sqrt{d}+2)}.
\end{align}
Then this set of operators can be used in the steering map, since all operators 
are positive semidefinite and uniquely determines the operator $\Sigma_{AB}$ and 
satisfies $\tr(\Sigma_{AB})=1$.
\end{proposition}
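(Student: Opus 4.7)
The plan is to verify three separate assertions about $\mathcal{Z}=\{Z_{kl}\}$: (a) every $Z_{kl}$ is positive semidefinite, (b) the set satisfies the linear constraint Eq.~(\ref{eq:condZ}) so that $\Sigma_{AB}$ is uniquely determined, and (c) $\tr(\Sigma_{AB})=1$.

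For (a), I would first note that $Z_{kl}$ is a unitary conjugate of $Z_{00}$, so it suffices to check $Z_{00}\geq 0$. A short calculation using $\bra{\phi_0}\ket{\psi_0}=1/\sqrt{d}$ (which is real) shows that $\ket{\chi_+}$ and $\ket{\chi_-}$ are an orthonormal pair. Hence $\mathbbm{1}-\ket{\chi_+}\bra{\chi_+}-\ket{\chi_-}\bra{\chi_-}$ is the projector onto their orthogonal complement, and $Z_{00}$ is a nonnegative combination of orthogonal projectors, therefore positive semidefinite (since $\mu_1,\mu_2\geq 0$ for $d\geq 2$).

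The substantive step is (b). For $n=2$ the relation (\ref{eq:condZ}) reduces, on setting $j_1=j_2=0$, to $Z_{kl}=Z_{k0}+Z_{0l}-Z_{00}$ for all $k,l$. My strategy is to expand $Z_{00}$ in the five operators $\{\ket{\phi_0}\bra{\phi_0},\ket{\psi_0}\bra{\psi_0},\ket{\phi_0}\bra{\psi_0},\ket{\psi_0}\bra{\phi_0},\mathbbm{1}\}$ using the rank-one forms of $\ket{\chi_\pm}\bra{\chi_\pm}$. Introducing $\alpha_\pm=1/[2(1\pm 1/\sqrt{d})]$ for the normalization weights, one finds that the coefficient of $\ket{\phi_0}\bra{\psi_0}$ (and its adjoint) in $Z_{00}$ equals $-\mu_1\alpha_-+\mu_2(\alpha_--\alpha_+)$, which vanishes precisely when $\mu_1/\mu_2=2/(\sqrt{d}+1)$. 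The quoted parameters are chosen to satisfy exactly this ratio; this is the key cancellation. With it, one obtains
\begin{equation}
Z_{00}=A\bigl(\ket{\phi_0}\bra{\phi_0}+\ket{\psi_0}\bra{\psi_0}\bigr)+D\,\mathbbm{1}
\end{equation}
for certain constants $A,D$. Applying $U_kV_l$ and using the stabilizing actions $V_l\ket{\phi_0}=\ket{\phi_0}$ and $U_k\ket{\psi_0}=\ket{\psi_0}$ from Eqs.~(\ref{eq:rules1})--(\ref{eq:rules2}) then yields
\begin{equation}
Z_{kl}=A\bigl(\ket{\phi_k}\bra{\phi_k}+\ket{\psi_l}\bra{\psi_l}\bigr)+D\,\mathbbm{1},
\end{equation}
in which the identity $Z_{kl}=Z_{k0}+Z_{0l}-Z_{00}$ is immediate by a term-by-term check, since each summand depends only on $k$, only on $l$, or on neither.

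For (c), unitary invariance gives $\tr(Z_{kl})=\tr(Z_{00})=\mu_1+(d-2)\mu_2$, and substituting the stated values simplifies this to $1$. Combined with the explicit form of $\omega^{\rm spec}$ from Lemma~\ref{lemma:w}, whose nonvanishing entries have traces $P(a|1)$, $P(a|2)$ ($a$ ranging up to the reference outcome) and $P(m|1)+P(m|2)-1$, one concludes
\begin{equation}
\tr(\Sigma_{AB})=\sum_{kl}\tr(\omega^{\rm spec}_{kl})=\sum_a P(a|1)+\sum_a P(a|2)-1=1.
\end{equation}
The main obstacle is clearly the off-diagonal cancellation in step (b): without the precise ratio $\mu_1/\mu_2=2/(\sqrt{d}+1)$, the phases $q^{\pm kl}$ picked up by $U_kV_l$ acting on $\ket{\phi_0}\bra{\psi_0}$ would survive in $Z_{kl}$ and ruin the additive decomposition required by Eq.~(\ref{eq:condZ}). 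Once that cancellation is in place, everything else reduces to bookkeeping with the action of $U_k,V_l$ on the two bases and a single numerical verification for the trace.
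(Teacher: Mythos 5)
Your proof is correct and follows essentially the same route as the paper's: positivity and unit trace of $Z_{00}$ from the orthogonality of $\ket{\chi_\pm}$, and uniqueness via the expansion $Z_{00}=c_1(\ket{\phi_0}\bra{\phi_0}+\ket{\psi_0}\bra{\psi_0})+c_2\mathbbm{1}$ made possible by the cancellation of the cross terms, followed by the covariance under $U_kV_l$. You merely make explicit the ratio condition $\mu_1/\mu_2=2/(\sqrt{d}+1)$ and the final trace bookkeeping, which the paper leaves implicit.
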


\begin{proof}
Using the form of $Z_{00}$ as given by Eq.~\eqref{eq:MUB_Z00} one sees 
that $Z_{00}$ is positive semidefinite, since both $\mu_i$ are strictly 
positive and $\ket{\chi_-}$ and $\ket{\chi_-}$ are orthogonal, moreover
it has unit trace. Since all other $Z_{kl}$ are obtained by a unitary 
transformation each $Z_{kl}$
is positive semidefinite and satisfies $\tr(Z_{kl})=1$, which directly shows 
that $\Sigma_{AB}$ has unit trace. Thus we are left to show that $Z_{kl}$ 
uniquely determines $\Sigma_{AB}$, for which we have to show
\begin{equation}
\label{eq:MUB_linrel}
Z_{kl}=Z_{kt}+Z_{sl}-Z_{st}
\end{equation}
for all $k,l,s,t \in \mathbbm{Z}_d$ according to Prop.~\ref{prop:Z}. In order 
to show this we expand the states $\ket{\chi_{\pm}}$ in $Z_{00}$ which results 
into the structure
\begin{equation}
Z_{00}=c_1 ( \ket{\phi_0}\bra{\phi_0}+\ket{\psi_0}\bra{\psi_0} ) + c_2 \mathbbm{1}
\end{equation}
with appropriate coefficients $c_1,c_2$. Note that at this point the very 
specific choices of $\mu_1$ and $\mu_2$ become important; they are chosen 
such that cross terms of  $\ket{\phi_0}\bra{\psi_0}$ or $\ket{\psi_0}\bra{\phi_0}$ 
vanish. Applying now the rules given by Eqs.~(\ref{eq:rules1}, \ref{eq:rules2}) one gets 
\begin{equation}
Z_{kl}=c_1(\ket{\phi_k}\bra{\phi_k}+\ket{\psi_l}\bra{\psi_l})+c_2 \mathbbm{1}
\end{equation}
from which the necessary relation given by Eq.~\eqref{eq:MUB_linrel} can be verified.
\end{proof}

In order to obtain a steering criterion one can use the given operators 
$Z_{kl}$ of the proposition to build up $\Sigma_{AB}$, which is uniquely 
determined by the given ensemble $\mathcal{E}$ in the $n=2$ and $m=d$ steering
case. Whenever this operator $\Sigma_{AB}$ is then not a separable state the 
underlying distribution is steerable.

\subsection{Dimension-bounded steering in a loophole free experiment of
Ref.~[19]}
First let us reiterate how to arrive at the data matrix necessary for employing the dimension bounded steering criterion. 
Alice and Bob have three different dichotomic measurements, 
$n_A=n_B=3$ and $m_A=m_B=2$, and we assume that Bob's measurement act 
onto a qubit $d_B=2$. The settings will be labeled by $x,y \in \{1,2,3\}$ 
and the outcomes by $a,b \in \{\pm 1\}$. 

According to Prop.2, let us first pick operators 
$Z_{ijk}$ with $i,j,k \in \{\pm 1\}$ that characterize a steering map 
with parameters $n_A=3$ and $m_A=2$. Here we choose 
$Z_{ijk}=[\mathbbm{1} + \left( i \sigma_1 + j \sigma_2 + k \sigma_3)/\sqrt{3} \right]/2,$ 
which can be interpreted as pure states, whose Bloch vectors point towards 
the $8$ different corners of the cube. It can be checked that these choices 
satisfy all relations given by Eq.(8) of the main text, so that, by construction, 
the operator $\Sigma_{AB}$ is uniquely determined by the ensemble $\mathcal{E}$ 
and furthermore normalized. This operator is given by
\begin{equation}
\label{eq:state1}
\Sigma_{AB} = 
\frac{1}{2}\left[ \mathbbm{1} \otimes \rho + \frac{1}{\sqrt{3}} \sum_{s=1}^3 \sigma_s \otimes (\rho_{+|s}-\rho_{-|s}) \right].
\end{equation}

In order to get to the data matrix $D$ 
we still 
need to fix the operator set $\{G_k^A\}_k$, for which the properly 
normalized identity and Pauli-operators, 
$\{ \mathbbm{1},\sigma_1,\sigma_2, \sigma_3\}/\sqrt{2}$, are convenient choices
since they only act non-trivially on certain terms in Eq.~\eqref{eq:state1}. Since only the subspace of $\{G_k^A\}_k$ matters in the criteria of Prop.2, any other basis choice will perform equally well. As the final step we rewrite the abstract values $\tr(B_y \rho_{a|x})$, 
with $B_0=\mathbbm{1}$ and $B_y=M_{+|y}-M_{-|y}$, in terms of the directly 
observable quantities $P(a,b|x,y)$. Looking at 
\begin{align*}
\tr[B_y &(\rho_{+|x}- \rho_{-|x})] \nonumber\\
=& \tr[(M_{+|y}-M_{-|y}) \rho_{+|x}] -  \tr[(M_{+|y}-M_{-|y}) \rho_{-|x}] \nonumber\\
=& P(+,+|x,y) - P(+,-|x,y) - \nonumber\\&\,[P(-,+|x,y) - P(-,-|x,y)] \equiv \braket{A_x B_y},
\end{align*}
one sees that correlations $\braket{A_xB_y}$ and respective marginals 
$\braket{A_x}, \braket{B_y}$, which similarly appear in Bell inequalities,
give an appropriate formulation. Hence, to sum up one gets the data matrix $D$
\begin{equation}
\nonumber
\small
\frac{1}{\sqrt{2}}\! \left[\! \begin{array}{cccc}
1 & \braket{B_1} & \braket{B_2} & \braket{B_3} \\
\braket{A_1}/\sqrt{3} & \braket{A_1 B_1}/\sqrt{3} & \braket{A_1 B_2}/\sqrt{3} & \braket{A_1 B_3}/\sqrt{3} \\
\braket{A_2}/\sqrt{3} & \braket{A_2 B_1}/\sqrt{3} & \braket{A_2 B_2}/\sqrt{3} & \braket{A_2 B_3}/\sqrt{3} \\
\braket{A_3}/\sqrt{3} & \braket{A_3 B_1}/\sqrt{3} & \braket{A_3 B_2}/\sqrt{3} & \braket{A_3 B_3}/\sqrt{3} \end{array} \!\right]. 
\end{equation}

Next let us explain how the developed criterion can be employed for the 
real setup used in Vienna~[19]. The main difference is that 
in the actual experiment one additionally observes an inconclusive outcome
``inc'' due to no click or even double click events. On Bob's side, the side
which is at least partially trusted, this event can safely be discarded~[19] 
assuming that this event is independent of the measurement choice such that 
it can be viewed as a kind of filter telling whether the final result will be 
conclusive or not. Only if this filter succeeds one looks at the corresponding 
state. For those measurements (acting on the conditional state) the measurements 
are assumed to act on a qubit, respective single photon in two polarization 
modes. 
However for Alice, the uncharacterized side, this is not possible. In order 
to incorporate the inconclusive event for Alice we consider the case that 
each inconclusive outcome ``inc'' is randomly assigned to either of the 
$+1$ or $-1$ outcome. This is also the standard for Bell experiments.
Then one is left with the dimension-bounded steering scenario considered
in the main section.

To finally give an example of the strength of our developed criterion we 
employ the following model to simulate real data: For the quantum state 
we assume a noisy maximally entangled singlet which has passed through 
a lossy channel for Alice, more precisely the state given by
\begin{align}
\nonumber
\rho_{AB} =& p\left[ \lambda \ket{\psi^-}\bra{\psi^-} + (1-\lambda) \mathbbm{1}/4 \right] \\&+ (1-p)\ket{\Omega}\bra{\Omega} \otimes \mathbbm{1}/2.
\end{align}
Here $p$ denotes the transmission probability, $\ket{\Omega}$ is the 
vacuum state and $\lambda$ a parameter characterizing the quality of 
the Werner state. In the true experiment there will be also loss on 
Bob's side, but as mentioned before, we look at the conditional state. 
Next we imagine that Alice and Bob perform projective measurements in 
the $\sigma_1,\sigma_2,\sigma_3$ basis, while the additional
``inc'' event for Alice is given by the projection onto the vacuum state.
Then the observed data, if Alice and Bob are using the same settings
$x,y$, are given by
\begin{align}
P(+,-|x,y)&=P(-,+|x,y)=\frac{1}{4}p(1+\lambda\delta_{x,y}), \\
P(+,+|x,y)&=P(-,-|x,y)=\frac{1}{4}p(1-\lambda\delta_{x,y}), \\
P(inc,+|x,y)&=P(inc,-|x,y)=\frac{1}{2}(1-p).
\end{align}
If one reassign each ``inc'' one obtains
\begin{align}
P(+,-|x,y)&=P(-,+|x,y)=\frac{1}{4}(1+p\lambda\delta_{x,y}), \\
P(+,+|x,y)&=P(-,-|x,y)=\frac{1}{4}(1-p\lambda\delta_{x,y}),
\end{align}
and thus
\begin{equation}
\braket{A_x B_y} = - \delta_{x,y} p \lambda, \:\:\braket{A_x} = \braket{B_y}= 0.
\end{equation}
Putting these observations into the data matrix from the main text 
one obtains
\begin{equation}
D = \left[ \begin{array}{cccc}
\frac{1}{\sqrt{2}} & 0 & 0 & 0 \\
0 & -\frac{p\lambda}{\sqrt{6}} & 0 & 0 \\
0 & 0 & -\frac{p\lambda}{\sqrt{6}} & 0  \\
0 & 0 & 0 & -\frac{p\lambda}{\sqrt{6}} \end{array} \right],
\end{equation}
which shows steering according to Eq.~(16) in the main text if $p\lambda > 1/\sqrt{3}\approx 0.577$.
Let us point out that this is also the condition if we would know that the performed 
measurements are perfect projective measurements in the eigenbasis of 
$\sigma_1,\sigma_2,\sigma_3$. Thus, we see that we have here a scenario 
where this further characterization is totally redundant and only the 
knowledge that one measures a qubit is essential.

Assuming the visibility and detection efficiency parameters from 
Ref.~[19], one would obtain the values $\{0.74,0.73,0.73\}$ 
for the respective $p\lambda$, which are all well above the threshold. 
Assuming that all other correlations and marginals vanish, this would 
strongly show steering also in the case where one has only the very 
limited knowledge that the conclusive outcomes were qubit measurements.
However, note, that these other observations are essential for the 
inequality, otherwise one could not gain the required extra knowledge 
of the uncharacterized qubit measurements. Unfortunately, these 
experimental data are not available anymore for the experiment of 
Ref.~[19].

\bibliographystyle{apsrev}

\end{document}